\theoremstyle{plain}
\newtheorem{thm}{Theorem}
\def\theequation{\arabic{section}.\arabic{equation}}
\newcommand{\be}{\begin{eqnarray}}
\newcommand{\ee}{\end{eqnarray}}
\newcommand{\nn}{\nonumber \\}
\newcommand{\lb}{\label}
\newcommand{\p}[1]{(\ref{#1})}
\begin{document}

\begin{titlepage}

\vspace*{0.2cm}

\renewcommand{\thefootnote}{\star}
\begin{center}

{\LARGE\bf  Comments on HKT  supersymmetric sigma  }\\

\vspace{0.5cm}

{\LARGE\bf models and their Hamiltonian reduction }\\

\vspace{1.5cm}
\renewcommand{\thefootnote}{$\star$}

{\large\bf Sergey~Fedoruk} ${}^{a\,b\,\ddagger}$,
\quad {\large\bf Andrei~Smilga} ${}^c$
 \vspace{0.5cm}

{${}^a$ \it Bogoliubov Laboratory of Theoretical Physics, JINR,}\\
{\it 141980 Dubna, Moscow region, Russia} \\
\vspace{0.1cm}

{\tt fedoruk@theor.jinr.ru}\\
\vspace{0.7cm}

{${}^b$ \it Dept. Theor. Phys., Tomsk State Pedagogical University, \\
Tomsk, 634061, Russia}  \\

\vspace{0.7cm}

{${}^c$\it SUBATECH, Universit\'e de Nantes,}\\
{\it 4 rue Alfred Kastler, BP 20722, Nantes 44307, France;}\\
\vspace{0.1cm}

{\tt smilga@subatech.in2p3.fr}\\

\end{center}
\vspace{0.2cm} \vskip 0.6truecm \nopagebreak

\begin{abstract}
\noindent
Using complex notation, we present new simple expressions for two pairs
of complex supercharges in HKT supersymmetric sigma models. The second pair
of supercharges depends on the holomorphic antisymmetric
``hypercomplex structure'' tensor ${\cal I}_{jk}$ which plays the same role
for the HKT models as the complex structure tensor for the K\"ahler models. When
the Hamiltonian and supercharges commute with the momenta conjugate to the
imaginary parts of the complex coordinates, one can perform a Hamiltonian reduction. The models thus
obtained represent a special class of quasicomplex sigma
models introduced recently in \cite{quasi}.

\end{abstract}

\vspace{1cm}
\bigskip
\noindent PACS: 11.30.Pb, 12.60.Jv, 03.65.-w, 03.70.+k, 04.65.+e

\smallskip
\noindent Keywords: sigma-model, supersymmetric mechanics, K\"ahler geometry, torsions \\
\phantom{Keywords: }

\vspace{1cm}

-----------------------

\vspace{0.3cm}

{\footnotesize $^{\ddagger}$ On leave of absence from V.N.\,Karazin Kharkov National University, Ukraine.}

\newpage

\end{titlepage}

\setcounter{footnote}{0}

\setcounter{equation}0
\section{Introduction}
Supersymmetric quantum mechanical (SQM) models describing the motion
of a supersymmetric particle on a curved manifold have been studied since
\cite{Witten}. Most of these problems represent a reformulation of
classical problems of differential geometry. In particular, the model analyzed
in \cite{Witten} boils down to the well-known de Rham complex.

 The powerful supersymmetry formalism allows one to reproduce
 known mathematical results in a simple way.
 In this regard,  one can mention the famous Atiyah-Singer theorem \cite{AS}.
A pure mathematical proof of this theorem is rather complicated. On the other hand, its supersymmetric
proof using the functional integral formalism \cite{ASsusy} (see also \cite{IS,HRR}) is transparent
and beautiful.

 But supersymmetry makes it also possible to derive {\it new} results. In particular, it allows to
construct new differential geometry structures not studied before by mathematicians. For example,
the SQM model studied in \cite{Witten} and involving an extra potential is called now
``Witten deformation of the de Rham complex''. There are other deformations of the classical
de Rham and Dolbeault complexes involving torsions \cite{torsion,FIS1}.
The HKT models (the subject of the present paper) were first introduced
by physicists in the supersymmetric sigma model framework \cite{HKT} (see also earlier papers
\cite{IvKrLev,SSTvP,DelVal,Papa95}
where some elements of the HKT structure were displayed) and only then were described
in pure mathematical terms \cite{Grant,Verb}. Less known CKT and OKT
models \cite{CKT,Hull,FIS2} are still awaiting their appreciation by mathematicians.
The same concerns the recently discovered quasicomplex sigma models.

To find a way in this multitude of models, one needs road maps.
We noticed in \cite{taming} that {\it all} these models can be obtained from
the trivial flat Dolbeault model with
  \be
 \lb{triv}
Q = \psi_a \pi_a , \ \ \ \ \ \ \ \ \bar Q = \bar \psi_a \bar \pi_a, \ \ \ \ \
 \ H = \bar \pi_a \pi_a
  \ee
by two operations: {\it (i)} similarity transformation of complex supercharges
and {\it (ii)} Hamiltonian reduction. In particular, a similarity
transformation
 \be
\lb{sim}
Q \ \to \ e^R Q e^{-R}, \ \ \ \ \ \bar Q \ \to \ e^{-R^\dagger} \bar Q
e^{R^\dagger}
  \ee
with $R = \omega_{ab}  \psi_a \bar \psi_b $ applied to \p{triv} gives a model
describing a nontrivial Dolbeault complex. If the metric
 \be
\lb{metrhjk}
h_{m\bar{n}} \ =\ \left(e^{-\omega} e^{-\omega^\dagger} \right)_{m \bar n}
 \ee
thus obtained does not depend on imaginary parts of the complex coordinates
$z^m$, the momenta $\pi_m - \bar \pi_m$ commute with the Hamiltonian and one
can perform a Hamiltonian reduction giving a model with half as much bosonic
degrees of freedom $\{{\rm Re} (z^m) \}$
\footnote{The Hamiltonian can, of course, commute with any number of momenta.
The corresponding Hamiltonian reductions give different models some of
which were discussed in \cite{taming}. In this paper, we will discuss only the
Hamiltonian reduction with respect to all imaginary parts of $z^m$. }.
 If the Hermitian metric \p{metrhjk} involves an imaginary part,
\be
\lb{sym+antisym}
h_{m {\bar n}} \ =\ \frac 12 \left(
g_{(m {\bar n})} +  ib_{[m {\bar n}]} \right) \, ,
 \ee
we obtain a quasicomplex model \cite{quasi} (the origin of the factor $1/2$
in \p{sym+antisym} will be clarified later). If $b_{[m {\bar n}]} = 0$, we
obtain a usual de Rham model of \cite{Witten}.

Both Dolbeault and de Rham models can have extended supersymmetries.
The de Rham model with an extra pair of supercharges can be formulated
for K\"ahler even-dimensional manifolds \cite{Zumino,DMPH,MP}. Mathematicians know
this model as the K\"ahler -- de Rham complex. There are also ${\cal N} = 8$ supersymmetric
(i.e. including 8 different real supercharges)
de Rham models with 3 extra pairs of supercharges and defined on hyper-K\"ahler manifolds.
A Dolbeault model with an extra pair of supercharges is called an HKT model
\footnote{ HKT stands for {\it hyper-K\"ahler with torsion}. This name is probably a little
bit misleading because these manifolds are {\it not} hyper-K\"ahler and not even K\"ahler, but
a better one was not invented.}. If its metric does not depend on ${\rm Im}
(z^m) $, one can perform a Hamiltonian reduction.

Our main observation is that a model thus obtained belongs to the class
of quasicomplex models representing their special type. It enjoys ${\cal N} = 4$ supersymmetry.

 The explicit component expressions for the HKT supercharges
were derived in \cite{QHKT}. However, they were written in terms of real coordinates. To perform
the Hamiltonian reduction described above, we need first to represent them in complex form.
If expressing in proper terms, the corresponding expressions turn out
to be very simple [see Eq.\p{SR-HKT-s} below]. This representation make manifest the
kinship between the mathematical
structure of the HKT models and the structure of K\"ahler -- de Rham models. The latter are
characterized by a presence of  the closed K\"ahler form. The components of this form define
the complex structure tensor $I_{MN}$. Similarly, an HKT manifold is characterized by the presence
of a closed holomorphic $(2,0)$ - form. Its components define a holomorphic tensor ${\cal I}_{mn}$ which
may be called a {\it hypercomplex structure} tensor.
\footnote{Throughout the paper, the real tensor indices  are
denoted by large latin letters $M,N,\ldots$, while small latin letters $m, \bar m, \ldots$
are reserved for the holomorphic and antiholomorphic complex indices.}

 The plan of the paper is the following. Sect. 2 represents a mathematical introduction where
 we {\it translate} many facts known to mathematicians into a language understandable to physicists.
\footnote{Unfortunately, the papers written by mathematicians and by theorists doing mathematical
physics are written in rather different languages, even when they are devoted to basically
the same subject. More often than not they are mutually not understandable and translation is
necessary.} In Sect. 3, after reminding how simple expressions for the supercharges can be derived
in ${\cal N} = 2$ models (the main idea is to treat the fermions with world
indices rather than the fermions with tangent
space indices as basic dynamical variables), we present new nice
 generic expressions for the complex HKT supercharges as well as
 the supercharges obtained after their Hamiltonian reduction.

In Sect. 4 (the central section of the paper),
we discuss the Hamiltonian reduction procedure invoking superfield formalism.
A generic Dolbeault ${\cal N} = 2$ model is expressed via ({\bf 2},\,{\bf 2},\,{\bf 0}) chiral superfields
\footnote{We follow the notation of \cite{PT} such that the numerals count the numbers of the physical bosonic,
physical fermionic and auxiliary bosonic fields.}.
When the metric depends only on real parts of the coordinates, one can perform
the Hamiltonian reduction with respect to imaginary parts. The reduced model is described in terms of
 ({\bf 1},\,{\bf 2},\,{\bf 1}) multiplets --- the imaginary parts of the coordinates are traded for
 auxiliary fields. Likewise, ${\cal N} = 4$ HKT models are described by  ({\bf 4},\,{\bf 4},\,{\bf 0}) multiplets
that involve four real or two complex coordinates.
After reduction, imaginary parts of the latter are traded for auxiliary fields and we are led to
({\bf 2},\,{\bf 4},\,{\bf 2}). Generically, one obtained a {\it deformed} K\"ahler -- de Rham complex
which involves extra ``quasicomplex'' terms. At the superfield level, such models involve, besides the familiar
K\"ahler potential term, a holomorphic $F$-term of some special form  [see Eq. \p{act242}].

 We emphasize that this type of  Hamiltonian reduction differs from the Hamiltonian reduction for
hyper-K\"ahler manifolds
\cite{HiKLR,GiRG,SmiS} and HKT manifolds \cite{GraPP} studied earlier. In \cite{HiKLR,GiRG,SmiS,GraPP},
the reduction related models of the same type: hyper-K\"ahlerian models to hyper-K\"ahlerian and HKT to HKT.
In our case, the reduction changes the geometry: a Dolbeault model gives after reduction a quasicomplex de Rham model and
an HKT model gives a quasicomplex K\"ahler model.

Short conclusions are drawn in the last section.

In Appendix\,A, we discuss in details how {\it Hamiltonian} reduction is  described
in {\it Lagrangian } component formalism.
In Appendix\,B,  we present complete component
Lagrangians of the original HKT theory with several interacting ({\bf 4},\,{\bf 4},\,{\bf 0}) multiplets and of
the quasicomplex K\"ahler -- de Rham theory with
 several interacting ({\bf 2},\,{\bf 4},\,{\bf 2}) multiplets.
In Appendix\,C, we give some technical details concerning establishing the correspondence between a
generic HKT model admitting reduction and its reduced K\"ahler -- de Rham quasicomplex daughter.

\section{Two definitions of HKT manifolds and their equivalence.}
\setcounter{equation}0

We assume that the reader is familiar with the geometry of K\"ahler and hyper-K\"ahler manifolds.
For a reader physicist, we can recommend the excellent review \cite{revcomplex}. The basic facts are
the following:
 \begin{itemize}
\item A {\it K\"ahler manifold} is characterized by an antisymmetric complex structure tensor
$I_{MN}$.~\footnote{Our index policy is the following. {\it (i)} Capital latin letters denote the indices in ${\cal R}^N$.
{\it (ii)} small latin letters are reserved for the indices of holomorphic variables. {\it (iii)}
In most cases but not always,
the indices of antiholomorphic variables are marked with a bar ($\bar z^{\bar m}$ etc) . {\it (iv)}
By the reasons which become clear later, the holomorphic indices in Sect.4.2.3 are Greek.}
The property $I_{MN} I^{NK} = - \delta_M^K$ holds. $I_{MN}$ is covariantly constant,
$\nabla_P I_{MN} = 0$. It follows that the K\"ahler form $\Omega = I_{MN} \, dx^M \wedge dx^N$ is closed,
$d\Omega = 0$.

  \item A generic complex manifold also involves an antisymmetric complex structure tensor $I$, but
 the standard covariant derivative
$\nabla_P I_{MN}$ (with symmetric Christoffel symbols) does not necessarily vanish. $I$ should satisfy,
however, certain integrability conditions,
  \be
\lb{Nijen}
 \nabla_{[M} I_{N] P} \ =\ I_M^{\ Q} I_N^{\ S} \nabla_{[Q} I_{S] P} \, .
 \ee
Eq.\p{Nijen} amounts to the vanishing of the so called Nijenhuis tensor.
\footnote{The Nijenhuis tensor is defined as
\be
\label{Nijenhuis}
N^I_{JK} = I^M_{\ [J} \partial_M I^I_{\ K]} \ -\ I^I_{\ M} \partial_{[J} I^M_{\ K]} \, .
 \ee
Its vanishing may be expressed as a condition
 \be
\label{NijenbezGamm}
\partial_{[M} I_{N]}^{\ P} \ = \  I_M^{\ Q} I_N^{\ S} \partial_{[Q} I_{S]}^{\ P} \, .
 \ee
One can observe that one can as well replace the usual derivatives in \p{NijenbezGamm} by covariant ones.
Lowering the index $P$ then gives \p{Nijen}.
}
 It is necessary to be able
to define  (anti)holomorphic
coordinates $x^M = \{z^m, \bar z^{\bar m} \}$ with Hermitian metric, $ds^2 = 2h_{m \bar n} dz^m d{\bar z}^{\bar n}$ on the whole manifold.
In addition, if \p{Nijen} does not hold, nilpotent supercharges cannot be constructed.

When the complex coordinates  are chosen, the tensor $I_M^{\ N}$ has the following
nonzero components,
  \be
\lb{Icompl}
I_m^{\ n} = -I^n_{\ m} = -i \delta_m^n, \ \ \ I_{\bar m}^{\ \bar n} = -I^{\bar n}_{\ \bar m} = i
\delta_{\bar m}^{\bar n} \, .
  \ee
It follows that $I_{m {\bar n}} = - I_{{\bar n} m} = -ih_{m{\bar n}}$.

\item As was mentioned, a standard covariant derivative of $I_{MN}$ does not
generically vanish. However,
for any $I$ satisfying the conditions above, one can define an affine connection
   \be
\lb{affineconn}
 \hat \Gamma^M_{NK} \ =\  \Gamma^M_{NK} + \frac 12 g^{ML} C_{LNK}
 \ee
with the torsion tensor $C_{LNK}$ antisymmetric under $N \leftrightarrow K$
such that $\hat \nabla_P I_{MN} = 0$. If one requires for the tensor   $C_{LNK}$ to be
totally antisymmetric, such connection is unique and is called {\it Bismut connection}
\cite{Bismut}. Explicitly,
\be
\lb{CBismut}
 C^{(Bismut)}_{MNK}(I) \ =\  I_M^{\ P} I_N^{\ Q} I_K^{\ R} (\nabla_P I_{QR} +  \nabla_Q I_{RP} +
\nabla_R I_{PQ}) \, .
\ee
In complex coordinates, this tensor involves only the components of the type
$(2,1)$ and $(1,2)$. The explicit expressions  are \cite{IS}
\be
\lb{Ccompl}
C_{mn{\bar p}}  =  C_{n{\bar p}m} \ = C_{{\bar p}mn} = \ \partial_n h_{m \bar p} - \partial_m h_{n {\bar p}}, \nn
C_{{\bar m}{\bar n}p} =  C_{{\bar n}p {\bar m }} = C_{p{\bar m}{\bar n}} =
\ \partial_{\bar n} h_{{\bar m} p} - \partial_{\bar m} h_{{\bar n} p} \, .
\ee

\item A  {\it hyper-K\"ahler manifold} has three different antisymmetric
covariantly constant complex structures $I,J,K$
satisfying the quaternion algebra
\be
\lb{quaternion}
I^2 = J^2 = K^2 = -1, \ \ \ \ \
 IJ = K, \ \ \  JK = I, \ \ \  KI = J \, .
 \ee

\item Finally, we define a {\it hypercomplex manifold} as a manifold with three integrable quaternionic complex structures
whose standard  covariant derivatives do not necessarily vanish.
The real dimension of a hypercomplex manifold
is an integer multiple of 4 --- the same as for the hyper-K\"ahler manifolds.

  \end{itemize}

We go over now to the HKT manifolds. There are two equivalent  definitions:

\vspace{.1cm}

{\bf Definition 1.}
An HKT manifold is a hypercomplex manifold where the complex structures satisfy
 an additional constraint: they  are covariantly constant with {\it one and the same}
 torsionful Bismut affine connection,
 \be
\lb{CICJCK}
  C_{MNK}(I) \ =\  C_{MNK}(J) \ =\  C_{MNK}(K) \, .
 \ee

\vspace{.1cm}

{\bf Definition 2.} An HKT manifold is a hypercomplex manifold  where  the $(2,0)$ - form
\be
\lb{omegahol}
\omega = \ \Omega_J + i\Omega_K = \ (J + iK)_{MN} \, dx^M \wedge dx^N
 \ee
(we will shortly see that it is  holomorphic with respect to $I$) is closed,
$$\partial_I \omega = 0$$.

\vspace{.1cm}

We will give a proof here for the half of the equivalence theorem (see e.g. \cite{Verb} for another half).
Taking \p{CICJCK} as a basic definition (suggested originally in \cite{HKT}), we construct the closed
holomorphic $(2,0)$ - form. The existence of such form was first proven in \cite{Gauduchon}.
We follow here much more user-physicist-friendly \cite{Grant}.

As a first step, we introduce two  operators  associated with the complex structure $I$ and acting
on $n$ -- forms. The operator  $\iota$ is defined  according to
  \be
\lb{defiota}
&&{\rm if} \ \ \ \ \ \ \ \ \ \ \omega \ =\   \omega_{M_1 \ldots M_n} dx^{M_1} \wedge \cdots \wedge
dx^{M_n}\,, \nn
&&{\rm then} \ \ \ \ \iota \omega \ =\ n \, \omega_{N [M_2 \ldots M_n} (I)^{N}_{\ M_1]}
 dx^{M_1} \wedge \cdots \wedge dx^{M_n}\,.
 \ee
For a form $\omega_{p,q}$ with $p$ holomorphic and $q$ antiholomorphic indices,
  \be
\label{iotapq}
\iota \omega_{p,q} \ =\ i(p-q) \omega_{p,q} \, .
 \ee

Another operator $\omega \to I\omega$ is defined as
   \be
\lb{defIomega}
I\omega \ =\ I_{M_1}^{\ N_1} ...  I_{M_n}^{\ N_n}    \,  \omega_{N_1 \ldots N_n} \, dx^{M_1} \wedge \cdots \wedge
dx^{M_n} \, .
  \ee
When acting on the form of the type $(p,q)$, it multiplies $\omega$ by the factor $i^{q-p}$.

Finally, on top of the usual exterior derivative $d$, we introduce the operator
$$d_I \ =\ [d, \iota] \, .$$
 Representing $d$ as the sum of the holomorphic and antiholomorphic (with respect to $I$) exterior derivatives, $d
= \partial_I + \bar \partial_I$ and using \p{iotapq}, we easily derive $d_I = i(\bar \partial_I - \partial_I)$ (and hence $\partial_I = (d + i d_I)/2$).
For an integrable $I$, complex coordinates can be chosen such the complex structure matrix \p{Icompl} is constant.  In this case, we can
write a simple explicit expression for $d_I$,
 \be
\lb{defdI}
d_I \omega \ =\ I_M^{\ S}  \,   \,  \partial_S \omega_{N_1 \ldots N_n} \, dx^M \wedge dx^{N_1} \wedge \cdots
\wedge dx^{N_n} \, .
  \ee

We prove now some simple lemmas.

\vspace{0.1cm}

{\bf Proposition 1.}
\be
\lb{d=IdI}
d_I \omega \ =\ (-1)^n I d(I\omega) \, ,
 \ee
where $n$ is the order of the form.

{\it Proof:} Choose the complex coordinates. Consider the R.H.S. of \p{d=IdI}
and use the complex expression  \p{Icompl} for $I$.
The components $I_M^{\ N}$ are thus constant and the partial derivatives do not act upon them.
The form $d(I\omega)$ has the order $n+1$ and, according to \p{defIomega}, the expression
$I d(I\omega)$ has altogether $(n+1) + n = 2n+1$ factors of $I$.
 This involves $n$ pairs giving $I^2 = -1$ [ this compensates
the factor $(-1)^n$] and we are left with just one unpaired factor.
 We obtain
the expression \p{defdI}.
 In contrast to \p{defdI}, the R.H.S. of \p{d=IdI} has a tensorial form and is valid with {\it any}
choice of coordinates.

\vspace{0.1cm}

{\bf Proposition 2.} The form \p{omegahol}
has the type $(2,0)$ with respect to $I$.

{\it Proof:}
Indeed, using the definition \p{defiota} and the properties \p{quaternion}, it is easy to derive $\iota \omega
= 2i\omega$.

\vspace{0.1cm}

{\bf Proposition 3.} For any complex manifold,
 \be
\lb{dIOmI}
d_I \Omega_I \ =\ \frac 13 \, C_{MPQ} \, dx^M \wedge dx^P \wedge dx^Q \, .
 \ee

{\it Proof:}
Choosing complex coordinates and bearing in mind \p{Icompl},
\p{defdI} and \p{Ccompl}, we derive
$$ d_I \Omega_I =  C_{m p {\bar q}} dz^m \wedge dz^p \wedge d \bar z^{\bar q}
 +  C_{{\bar m} {\bar  p} q} d \bar z^{\bar m} \wedge d\bar z^{\bar p}  \wedge dz^q \, ,$$
which coincides with \p{dIOmI}.

\vspace{0.1cm}

{\bf Corollary:} For the HKT manifolds where the Bismut torsions for $I,J,K$ coincide,
 \be
\lb{dIOI=dJOJ}
 d_I \Omega_I = \ d_J \Omega_J =\ d_K \Omega_K \, .
 \ee

\vspace{0.1cm}

{\bf Proposition 4.}
Let $I,J,K$ be quaternion complex structures. Then
\be\lb{IOm=Om}
\begin{array}{c}
I\Omega_I = \Omega_I\,, \qquad J\Omega_J = \Omega_J\,, \qquad K\Omega_K = \Omega_K\,, \\[6pt]
J\Omega_I = K\Omega_I = -\Omega_I\,, \qquad
I\Omega_J = K\Omega_J = -\Omega_J\,, \qquad
I\Omega_K = J\Omega_K = -\Omega_K \, .
\end{array}
\ee

{\it Proof:} Let us prove the relation $J\Omega_I = -\Omega_I$.
By definition,
$$ J\Omega_I \ =\ J_M^{\ P} J_S^{\ Q} I_{PQ} \, dx^M \wedge dx^S \, .$$
On the other hand,
$$J_M^{\ P} J_S^{\ Q} I_{PQ} \ =\ -K_{MQ} J_S^{\ Q} = -I_{MS} \, . $$
Other relations are proved similarly.
\vspace{0.1cm}

{\bf Remark.}  The condition \p{dIOI=dJOJ} can be rewritten bearing in mind
\p{d=IdI} and the first line in \p{IOm=Om} as
 \be
\lb{IdOmI-JdOmJ}
 I d\Omega_I = J d \Omega_J = K d \Omega_K \, .
 \ee

We are ready now to prove the main theorem

\begin{thm}
\be
\lb{drondom=0}
 \partial_I (\Omega_J + i\Omega_K) \ =\ 0 \,.
 \ee
\end{thm}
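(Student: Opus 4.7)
The plan is to expand $\partial_I = \tfrac12(d + i d_I)$ and reduce the identity $\partial_I \omega = 0$ (for $\omega = \Omega_J + i\Omega_K$) to a pair of real relations between $d\Omega_J$ and $d\Omega_K$ that are immediate consequences of the HKT input in the form \p{IdOmI-JdOmJ}.

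First I would apply Proposition 1 to each of the 2-forms $\Omega_J$ and $\Omega_K$: for $n=2$ the prefactor $(-1)^n$ is $+1$, and the identities $I\Omega_J = -\Omega_J$ and $I\Omega_K = -\Omega_K$ from \p{IOm=Om} give $d_I \Omega_J = - I\, d\Omega_J$ and $d_I \Omega_K = - I\, d\Omega_K$. Assembling,
\begin{equation*}
2\,\partial_I \omega \;=\; d\omega + i\, d_I \omega \;=\; \bigl(d\Omega_J + I\, d\Omega_K\bigr) \;+\; i\bigl(d\Omega_K - I\, d\Omega_J\bigr),
\end{equation*}
so the theorem reduces to the two real equalities $d\Omega_K = I\, d\Omega_J$ and $d\Omega_J = -\,I\, d\Omega_K$.

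Both are extracted from the HKT relation $J\,d\Omega_J = K\,d\Omega_K$ of \p{IdOmI-JdOmJ} by composing with a further quaternion operator. Acting with $K$ on both sides uses $K^2 = -1$ on the 3-form on the right and $KJ = -I$ on the left, producing $-I\,d\Omega_J = - d\Omega_K$; acting with $J$ instead uses $J^2 = -1$ together with $JK = I$, producing $-d\Omega_J = I\,d\Omega_K$. Both brackets above then vanish and the theorem follows.

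The only real obstacle is sign bookkeeping. The quaternionic identities \p{quaternion} are matrix relations, whereas the operators $I,J,K$ act on an $n$-form by applying the corresponding matrix to each of the $n$ indices; the composition laws $JK = I$ and $KJ = -I$ therefore survive unchanged on forms of any degree, but the squares $I^2 = J^2 = K^2$ scale by an overall factor $(-1)^n$, which flips sign in passing from the 2-forms $\Omega_{I,J,K}$ (used in Proposition 1) to the 3-forms $d\Omega_{I,J,K}$ (used in the quaternion trick above). Once this parity is tracked, no further calculation is required.
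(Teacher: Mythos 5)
Your proof is correct and follows essentially the same route as the paper's: both split $\partial_I\omega=0$ into the two real Cauchy--Riemann relations, convert $d_I\Omega_{J}$ and $d_I\Omega_K$ into $-I\,d\Omega_{J}$ and $-I\,d\Omega_{K}$ via Propositions 1 and 4, and close the argument by applying the quaternion algebra to $J\,d\Omega_J=K\,d\Omega_K$. One small caveat in your sign discussion: the scalar $-1$ in $KJ=-I$ acquires the same $(-1)^n$ as the squares when the matrix is promoted to a form operator (so $K\circ J=+I$ on 2-forms, not $-I$ ``on forms of any degree''), but since you only apply it to the 3-forms $d\Omega_{J,K}$, where $(-1)^3=-1$, your computation is unaffected.
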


\begin{proof}
The real and imaginary parts of \p{drondom=0} give a kind of Cauchy-Riemann conditions
\be
\lb{Cauchy}
 d\Omega_J - d_I \Omega_K \ =\ 0, \ \ \ \ \ d\Omega_K + d_I \Omega_J = 0 \, .
\ee
Consider the first relation. We obtain
 $$ d_I \Omega_K  \stackrel{1}{=} I d(I\Omega_K)  \stackrel{4}{=} - I d\Omega_K = -JK d\Omega_K
\stackrel{remark}{=} - J^2 d\Omega_J = d\Omega_J \, .$$
The number ``1'' above the equality sign means {\it in virtue of the Proposition 1}, etc.

The relation $d\Omega_K + d_I \Omega_J = 0$ is proved similarly.

\end{proof}

\setcounter{equation}0
\section{Supercharges and reduced supercharges.}

\subsection{ De Rham, K\"ahler - de Rham, Dolbeault, and quasicomplex systems.}

The classical supercharges of the best known de Rham SQM sigma model are
usually presented in the form
\begin{equation} \label{Qtangent}
\begin{array}{rcl}
Q&=&\psi^M\left(P_M-i\Omega_{M,AB}\psi_A\bar\psi_B\right),
\\[8pt]
\bar Q&=&\bar\psi^M\left(P_M-i\Omega_{M,AB}\bar\psi_A\psi_B\right),
\end{array}
\end{equation}
where $A$, $B$ are the tangent space indices, $\psi_A = e_{AM} \psi^M$, $g_{MN}= e_{AM}e_{AN}$, and
\begin{gather}
\Omega_{M,AB}=e_{AN}\big(\partial_M e^N_B+\Gamma^N_{MT}e^T_B\big)
\label{Lorconn}
\end{gather}
are spin connections. The  ``f\/lat'' fermion variables
$\psi_A$, $\bar\psi_A$ constitute, together with $x^M$, $P_M$, the
orthogonal canonically conjugated pairs.

For our purposes,
it is more convenient to express the supercharges in terms of fermionic variables
carrying world indices. The commutation relations are in this case more
complicated,
\begin{equation}
\label{Poisson}
\begin{array}{c}
\big\{x^M, \Pi_N\big\}_{\rm P.B.}\ = \ \delta^M_N \,
, \qquad \big\{\psi^M,\bar\psi^N\big\}_{\rm P.B.}\ = \ -ig^{MN},
\\[8pt]
\big\{\Pi_M,\psi^N\big\}_{\rm P.B.}=-\frac12\,\partial_M g^{NQ}\psi_Q,
\qquad
\big\{\Pi_M,\bar\psi^N\big\}_{\rm P.B.}=-\frac12\,\partial_M g^{NQ}\bar\psi_Q.
\end{array}
\end{equation}
($\{\}_{P.B.}$ stands for a Poisson bracket).
On the other hand, the expressions for the supercharges become much simpler
\cite{quasi},
\begin{equation} \label{QdeRham}
\begin{array}{rcl}
Q&=&\psi^M\left(\Pi_M-\frac i2\, \partial_M g_{NP} \, \psi^N\bar\psi^P\right),
\\[8pt]
\bar Q&=&\bar\psi^M\left(\Pi_M+\frac i2\, \partial_M g_{NP} \,
\psi^P\bar\psi^N\right),
\end{array}
\end{equation}

Note that the momenta $P_M$ and $\Pi_M$ are {\it not} the same. $P_M$ is the variation of the Lagrangian
over $\dot x^M$ while keeping $\psi_A$ and $\bar\psi_A$ fixed. And $\Pi_M$ is the variation of the Lagrangian
over $\dot x^M$ while keeping $\psi^M$ and $\bar\psi^M$ fixed. These two canonical momenta are related as \cite{FIS1}
 \begin{gather}\label{PcherezPi}
P_M \ =\
 \Pi_M+\frac i2\,\Big[(\partial_M e_{AP}) e_{AQ}- (\partial_M e_{AQ}) e_{AP}\Big]\psi^P\bar\psi^Q \, .
\end{gather}

   The {\it covariant} quantum supercharges that act on the wave functions
normalized with the measure  $d\mu = \sqrt{\det (g)} \, d^Nx$
have the same functional form with the operators $\Pi_M = -i\partial/\partial x^M$ and
 $\bar \psi^M = g^{MN} \partial/\partial \psi^N$.

For K\"ahler manifolds, the de Rham complex can be extended to involve an extra pair of supercharges. Being expressed in the same
terms as in \p{QdeRham}, they acquire a very simple form \cite{taming},
\begin{equation} \label{QKahler}
\begin{array}{rcl}
R &=&\psi^N I_N^{\ M} \left(\Pi_M-\frac i2\, \partial_M g_{NP} \, \psi^N\bar\psi^P\right),
\\[8pt]
\bar R &=&\bar\psi^N I_N^{\ M} \left(\Pi_M+\frac i2\, \partial_M g_{NP} \,
\psi^P\bar\psi^N\right),
\end{array}
\end{equation}

Similar simple expressions can be derived for the supercharges of the
Dolbeault complex,
\begin{equation} \label{Qcomplworld}
\begin{array}{rcl}
Q&=& \sqrt{2} \, \psi^m\left(\Pi_m - \frac i2\, \partial_m h_{n \bar p}  \, \psi^n
\bar\psi^{\bar p}\right),
\\[8pt]
\bar Q&=& \sqrt{2} \, \bar\psi^{\bar m}\left(\bar \Pi_{\bar m}  + \frac i2\, \partial_{\bar m}
h_{p \bar n} \,
\psi^p \bar\psi^{\bar n}\right).
\end{array}
\end{equation}

When $h_{m \bar n}$ does not depend on Im($z^p$), one can perform a Hamiltonian
reduction with identification $\Pi_m \equiv \bar \Pi_{\bar m} \to \Pi_M/2$.
\footnote{This implies the convention
$$
z = x+iy, \ \ \bar z = x-iy, \ \ \ \ \frac \partial {\partial z} =
\frac 12 \left(\frac \partial {\partial x} -i \frac \partial {\partial y}
\right), \ \ \frac \partial {\partial \bar z} =
\frac 12 \left(\frac \partial {\partial x} + i \frac \partial {\partial y}
\right)
$$
for each complex coordinate.}
If $h_{m \bar n}$ are real, we obtain the de Rham supercharges
\p{QdeRham}.~\footnote{When deriving this, we have to take into account \p{sym+antisym} and
to bear in mind that the canonical de Rham fermions $\psi^M$
carrying the world indices and satisfying \p{Poisson} have an additional factor
$1/\sqrt{2}$ compared to $\psi^m$.}
 For a generic Hermitian metric \p{sym+antisym},
we obtain the supercharges of a quasicomplex model,
\begin{equation} \label{Qquasi}
\begin{array}{rcl}
Q&=&\psi^M\left[\Pi_M-\frac i2\,\partial_M\left(g_{(NP)}+
i b_{[NP]}\right)\psi^N\bar\psi^P\right],
\\[8pt]
\bar Q&=&\bar\psi^M\left[\Pi_M+\frac i2\,\partial_M\left(g_{(NP)}-i b_{[NP]}\right)\psi^P\bar\psi^N\right] \, .
\end{array}
\end{equation}

\subsection{HKT supercharges}
 The expressions for four real supercharges in an HKT model were derived in \cite{QHKT}.
 They are
   \begin{eqnarray}\label{susy-char1-s-2f}
Q&=& \psi^{M}
\Big(P_{M} -{\textstyle\frac{i}{2}}\,\Omega_{M, AB}\psi^A\psi^B +{\textstyle\frac{i}{12}}\,C_{MNP}\,\psi^{N}\psi^{P}\Big),
\\[6pt]
\label{susy-char1-tr-2f}
Q^{{a=1,2,3}}&=& \psi^Q(I^{{a}})_{Q}{}^{M}\Big(
P_{M} -{\textstyle\frac{i}{2}}\,\Omega_{M,AB}\psi^{A}\psi^{B} - {\textstyle\frac{i}{\,4\,}}\,C_{MNP}\,
\psi^{N}\chi^{P}\Big) \, ,
\end{eqnarray}
where $\Psi^M$ are here {\it real} fermions with
$\{\Psi^M, \Psi^N \}_{\rm P.B.} = -i\delta^{MN}$ and $I^a = \{I,J,K\}$.

We choose now complex coordinates $x^M = \{z^m, {\bar z}^{\bar m}\}$ and construct  the complex combinations

  \begin{equation}\label{SR-def}
S= \frac {Q+iQ^1}{2}\,,\quad
\bar S= \frac{Q-iQ^1}{2}\,,\qquad
R= \frac{Q^2+iQ^3}{2} \,,\quad
\bar R= \frac{Q^2-iQ^3}{2} \, .
\end{equation}
  A short calculation gives

  \begin{equation}\label{S-HKT}
\begin{array}{rcl}
S^{\,{}^{\rm HKT}}&=& \sqrt{2} \, \psi^m\left[P_m-i\Omega_{m,k\bar l}\, \psi^k\bar\psi^{\bar l}\right]\,,\\[7pt]
\bar S^{\,{}^{\rm HKT}}&=& \sqrt{2}\, \bar\psi^{\bar m}\left[\bar P_{\bar m}-i\bar\Omega_{\bar m, k\bar l}\,
\psi^k\bar\psi^{\bar l}\right]\,,
\end{array}
\end{equation}
\begin{equation}\label{R-HKT}
\begin{array}{rcl}
R^{\,{}^{\rm HKT}}&=& \sqrt{2} \, \psi^n{\cal I}_{\,n}{}^{\bar m}
\left[\bar P_{\bar m}-i\left(\bar\Omega_{\bar m,k\bar l}+
{\textstyle\frac12}\,C_{\bar m,k\bar l}\right)
\psi^k\bar\psi^{\bar l}\right]\,,\\[7pt]
\bar R^{\,{}^{\rm HKT}}&=& \sqrt{2} \,
\bar\psi^{\bar n}{\cal I}_{\,\bar n}{}^{m}\left[P_m-i\left(\Omega_{m,k\bar l}+
{\textstyle\frac12}\,C_{mk\bar l}\right) \psi^k\bar\psi^{\bar l}\right]\,,
\end{array}
\end{equation}
where $\Omega_{m, k \bar l} = \Omega_{m, a \bar b} e^a_k { e}^{\bar b}_{\bar l}$ and ${\cal I}=J+iK$.

It is noteworthy that in the expressions for $S$ and $\bar S$, the torsions $C_{mk\bar l}, \,
C_{\bar m k \bar l}$ cancelled
such that $S, \bar S$ represent usual Dolbeault supercharges (cf. (3.15) of Ref.\cite{IS}).
The torsions enter, however in $R$ and $ \bar R$.
For hyper-K\"ahler manifolds, there are no torsions and the expressions \p{R-HKT} simplify.

Substituting the explicit expressions of $\Omega$ and $C$ via vielbeins,
\begin{equation}\label{om-HKT}
\Omega_{m,k\bar l}= e_{\bar l}^{\bar a}\partial_{[m}e_{k]}^a - e_{(m}^{a}\partial_{k)}e_{\bar l}^{\bar a}\,,\qquad
\bar\Omega_{\bar m,k\bar l}= - e_{k}^{a}\bar\partial_{[\bar m}e_{\bar l]}^{\bar a} +
e_{(\bar m}^{\bar a}\partial_{\bar l)}e_{k}^a \,,
\end{equation}
\begin{equation}\label{C-HKT}
{\textstyle\frac{1}{2}}\,C_{mk\bar l}= -e_{\bar l}^{\bar a}\partial_{[m}e_{k]}^a +
e_{[m}^{a}\partial_{k]}e_{\bar l}^{\bar a}\,,\qquad
{\textstyle\frac{1}{2}}\,\bar C_{\bar m\bar lk}= -
e_{k}^{a}\bar\partial_{[\bar m}e_{\bar l]}^{\bar a} + e_{[\bar m}^{\bar a}\partial_{\bar l]}e_{k}^a \,,
\end{equation}
we derive for the supercharges
\begin{equation}\label{S-HKT-c}
\begin{array}{rcl}
S^{\,{}^{\rm HKT}}&=&  \sqrt{2} \, \psi^m\left[P_m-i\left(e_{\bar l}^{\bar a}\partial_{m}e_{k}^a\right)
\psi^k\bar\psi^{\bar l}\right],\\[7pt]
\bar S^{\,{}^{\rm HKT}}&=& \sqrt{2} \, \bar\psi^{\bar m}\left[\bar P_{\bar m}+
i\left(e_{k}^{a}\bar\partial_{\bar m}e_{\bar l}^{\bar a}\right) \psi^k\bar\psi^{\bar l}\right],
\end{array}
\end{equation}
\begin{equation}\label{R-HKT-c}
\begin{array}{rcl}
R^{\,{}^{\rm HKT}}&=& \sqrt{2} \psi^n{\cal I}_{\,n}{}^{\bar m}
\left[\bar P_{\bar m}-i\left(e_{\bar l}^{\bar a}\bar\partial_{\bar m}e_{k}^a\right)
\psi^k\bar\psi^{\bar l}\right],\\[7pt]
\bar R^{\,{}^{\rm HKT}}&=&  \sqrt{2} \bar\psi^{\bar n}{\cal I}_{\,\bar n}{}^{m}
\left[P_m+i\left(e_{k}^{a}\partial_{m}e_{\bar l}^{\bar a}\right) \psi^k\bar\psi^{\bar l}\right].
\end{array}
\end{equation}

At the last step, we go over from the momenta $P_{ m}$, $\bar P_{\bar m}$ to the momenta
$\Pi_{ m}$, $\bar \Pi_{\bar m}$ (which are relevant when $\psi^m$ and $\bar \psi^{\bar m}$ rather than
$\psi^a$ and $\bar \psi^{\bar a}$ are treated as
fundamental dynamic variables) according to
\p{PcherezPi}. The supercharges
 take the simple nice form
\begin{equation}\label{SR-HKT-s}
\begin{array}{rcl}
S^{\,{}^{\rm HKT}}&=& \sqrt{2} \, \psi^m\left[\Pi_m-
{\textstyle\frac{i}{2}}\left(\partial_{m}h_{k\bar l}\right) \psi^k\bar\psi^{\bar l}\right],\\[8pt]
\bar S^{\,{}^{\rm HKT}}&=& \sqrt{2} \, \bar\psi^{\bar m}
\left[\bar \Pi_{\bar m}+{\textstyle\frac{i}{2}}\left(\bar\partial_{\bar m}h_{k\bar l}\right)
\psi^k\bar\psi^{\bar l}\right],
\end{array}
\end{equation}
\begin{equation}\label{SR-HKT-s1}
\begin{array}{rcl}
R^{\,{}^{\rm HKT}}&=& \sqrt{2} \, \psi^n{\cal I}_{\,n}{}^{\bar m}\left[\bar \Pi_{\bar m}-
{\textstyle\frac{i}{2}}\left(\bar\partial_{\bar m}h_{k\bar l}\right)
\psi^k\bar\psi^{\bar l}\right],\\[8pt]
\bar R^{\,{}^{\rm HKT}}&=& \sqrt{2} \, \bar\psi^{\bar n}{\cal I}_{\,\bar n}{}^{m}\left[\Pi_m+
{\textstyle\frac{i}{2}}\left(\partial_{m}h_{k\bar l}\right)\psi^k\bar\psi^{\bar l}\right].
\end{array}
\end{equation}

We observe a remarkable similarity with \p{QdeRham}, \p{QKahler}. For an HKT manifold, the matrix
${\cal I}_n{}^{\bar m}$ plays the same role as the usual complex structure for the K\"ahler - de Rham complex.
${\cal I}$ can thus be called the matrix of {\it hypercomplex structure}. The form ${\cal I}_{mn} \, dz^m \wedge dz^n$
is closed, as dictated by \p{drondom=0}.

When $h_{m \bar n}$ does not depend on the imaginary coordinate parts, one can perform the Hamiltonian reduction.
As an HKT manifold is a complex manifold of a special kind, we obtain after reduction a quasicomplex model
of a special kind. The reduced supercharges are
\begin{equation} \label{SRquasi}
\begin{array}{rcl}
S^{\,{}^{\rm quasi}}&=&\psi^M\left[\Pi_M-{\textstyle\frac{i}{2}}\,\partial_{M}\left(g_{KL}+ib_{KL}\right) \psi^K\bar\psi^{L}
\right],\\[7pt]
\bar S^{\,{}^{\rm quasi}}&=&\bar\psi^{M}\left[\Pi_{M}+{\textstyle\frac{i}{2}}\,\partial_{M}\left(g_{KL}+ib_{KL}\right) \psi^K
\bar\psi^{L}\right],
\end{array}
\end{equation}
\begin{equation} \label{SRquasi1}
\begin{array}{rcl}
R^{\,{}^{\rm quasi}}&=&\psi^N{\cal I}_{\,N}{}^{M}\left[\Pi_{M}-
{\textstyle\frac{i}{2}}\,\partial_{M}\left(g_{KL}+ib_{KL}\right)
\psi^K\bar\psi^{L}\right],\\[7pt]
\bar R^{\,{}^{\rm quasi}}&=&\bar\psi^{N}{\cal I}_{\,N}{}^{M}\left[\Pi_M+
{\textstyle\frac{i}{2}}\,\partial_{M}\left(g_{KL}+ib_{KL}\right)\psi^K \bar\psi^{L}\right]
\end{array}
\end{equation}

When the imaginary part of the metric $b_{KL}$ vanish, the supercharges
 \p{SRquasi}, \p{SRquasi1} boil down
to the K\"ahler supercharges \p{QdeRham}, \p{QKahler}. When it does not,
we are dealing with the {\it K\"ahler quasicomplex} model to be discussed in
more details in the next section.

\section{Hamiltonian reduction and superfields.}
\setcounter{equation}0

Hamiltonians of supersymmetric systems are expressed in components,
and Hamiltonian reduction is usually described in components too ---
see the component expressions for the reduced supercharges \p{Qquasi}, \p{SRquasi},
\p{SRquasi1} in the previous section. But it is interesting and instructive to see
what does it correspond to in Lagrangian superfield formulation.

\subsection{Dolbeault $\to$ quasicomplex de Rham.}

The Dolbeault complex is described by a set of chiral complex
({\bf 2}, {\bf 2}, {\bf 0})
superfields $Z^m$
\cite{IS}. They are expressed into components as
\be
\lb{Zjdecomp}
Z^m \ =\ z^m + \sqrt{2}\, \theta \psi^m - i \theta \bar \theta {\dot z}^m
\ee
The corresponding supersymmetry transformations are
\be
\lb{SUSYtranZ}
\delta z^m = -\sqrt{2}\, \epsilon \psi^m\,, \qquad
\delta \psi^m = i \sqrt{2}\, \bar \epsilon \dot{z}^m\, , \\ [4pt]
\delta \bar z^m = \sqrt{2}\, \bar\epsilon \bar \psi^m\,, \qquad
\delta \bar\psi^m = - i \sqrt{2}\,  \epsilon \dot{\bar z}^m\, .
\ee
We set now $\psi^m = \sqrt{2}\, \chi^m, z^m = x^m +iy^m$ to obtain
\be
\lb{SUSYtranX}
\delta x^m = - \epsilon \chi^m + \bar\epsilon \bar\chi^m \,, \qquad
\delta y^m = i (\epsilon \chi^m + \bar\epsilon \bar\chi^m) \, , \\ [4pt]
\delta \chi^m   =  \bar\epsilon (i \dot{x}^m - \dot{y}^m)\,,  \qquad
\delta \bar \chi^m = -    \epsilon (i\dot{x}^m + \dot{y}^m)\, .
\ee
and observe that \p{SUSYtranZ} {\it coincides} with the transformation law for
a ({\bf 1}, {\bf 2}, {\bf 1}) real superfield
\be
\label{XMdecomp}
X^m \equiv \ X^M = \ x^M+\theta\chi^M+\bar\chi^M\bar\theta + B^M\theta\bar\theta,
\ee
if identifying ${\dot y}^m = {\rm Im} ({\dot z}^m) \equiv B^M$.
\footnote{The observation that the supertransformation laws for the multiplets
with the same net number of the fermionic and bosonic components, but with a different distribution
of the latter among the
dynamic and auxiliary fields, coincide under such identification
was made long time ago in  \cite{BerPash,GatesRana}. This was discussed in the Hamiltonian reduction context
in \cite{BKMO} and in gauging approach (when the Hamiltonian commutes with Im$(\Pi_m)$,
one can impose the first
class constraint Im$(\Pi_m) = 0$ and treat the system as a gauge one)
in \cite{DI06}.}

The Lagrangian of the pure Dolbeault sigma model (without gauge field) is expressed via chiral superfields as
\be
\lb{LDolbeault}
L \ =\ -{\textstyle\frac 14} \int  d\bar\theta d\theta  \, h_{m\bar n} (Z, \bar Z) DZ^m \bar D \bar Z^{\bar n}
\ee
with Hermitian metric $h_{m \bar n}$
and
\begin{equation}\label{der-n2}
D_\theta=\partial_{\theta}-i\bar \theta \partial_t\,,\qquad \bar D_\theta=-\partial_{\bar\theta}+i \theta \partial_t\,.
\end{equation}
 If the metric does not depend on Im($z^m$), one can perform the Hamiltonian
reduction. The reduced  {\it Lagrangian} should be expressed via the superfields $X^M$ as
\be
\label{Lquasicomplex}
L^{\rm reduced}  = \ -\frac12\int d\bar\theta d\theta\, [g_{(MN)}(X) + ib_{[MN]}(X)]  \, DX^M\bar D X^N \, ,
  \ee
where $g/2$ and $b/2$ are real and imaginary parts of the Hermitian metric $h$, according to \p{sym+antisym}.
 Heuristically, \p{Lquasicomplex} is obtained from
\p{LDolbeault} by substituting $Z^m, \bar Z^{\bar m} \to 2X^M$, while taking into account \p{sym+antisym}.
When $h_{m \bar n}$ is real, this is the usual de Rham model. When it involves an antisymmetric imaginary part, we arrive
at the quasicomplex de Rham model of Ref.\cite{quasi}.

The  fact that the reduction of \p{LDolbeault}
gives \p{Lquasicomplex} looks very natural. It can be accurately derived in the following way:
{\it (i)} Express the Lagrangian \p{LDolbeault} into components.
{\it (ii)} It is invariant under the shifts $y^m \to y^m + c^m$
 (the corollary of the fact that the Hamiltonian commutes
 with the corresponding canonical momentum). In other words, it does not explicitly depend on $y^m$,
 but only on
$\dot{y}^m$.
{\it (iii)} Substitute  $\sqrt{2} \chi^M$ for $\psi^m$ and $B^M$ for $\dot{y}^m$.
The result coincides with the component expansion of
\p{Lquasicomplex}.

A more detailed justification of this procedure at the component level is given in Appendix~A.

\subsection{HKT $\to$ quasicomplex K\"ahler.}
Consider now the ${\cal N} = 4$ supersymmetric HKT model. The Lagrangian is expressed via
linear\,\footnote{There are also models expressed via nonlinear multiplets \cite{Delduc}, which we will not discuss here.}
${\cal N} = 4$
supermultiplets of the type ({\bf 4},\,{\bf 4},\,{\bf 0})  \cite{Copa,CKT,IL,IKL}.
A ({\bf 4},\,{\bf 4},\,{\bf 0}) multiplet lives in the ${\cal N} = 4$ superspace with the coordinates
$(t,\theta^{i{k}^\prime})$,
$(\overline{\theta^{i{k}^\prime}})= -\epsilon_{ij}\epsilon_{{k}^\prime{l}^\prime}\theta^{j{l}^\prime}
\equiv -\theta_{k'j} $.
The indices $i=1,2$ and $k^\prime=1,2$ are doublet indices of the ${\rm SU}_L(2)$ and ${\rm SU}_R(2)$ groups respectively, which
form the full automorphism group ${\rm SO}(4)={\rm SU}_L(2)\times{\rm SU}_R(2)$ of
the ${\cal N}{=}\,4$ superalgebra. Each multiplet carries a 4-vector or two spinor indices.
Its component decomposition is
\begin{equation}\label{X}
{\cal X}^{i l'} = x^{i l'}-\theta^{i{k}^\prime}\chi_{{k}^\prime}^{l'} +
i\theta^{i{k}^\prime}\theta_{{k}^\prime k }\dot x^{kl'} -
{\textstyle\frac{i}{3}}\,\theta^{i{i}^\prime} \theta_{{i}^\prime k} \theta^{k{k}^\prime}
\dot\chi_{{k}^\prime}^{l'}
-{\textstyle\frac{1}{12}}\,\theta^{k{k}^\prime}  \theta_{{k}^\prime j}
\theta^{j{i}^\prime}  \theta_{{i}^\prime k} \, \ddot x^{\,k l'},
\end{equation}
and so it encompasses four real bosonic component  fields $(\overline{x^{ik'}})=
-\epsilon_{ij}\epsilon_{k'  l'}x^{jl'}$
and four real fermionic component fields
$(\overline{\chi^{i^\prime k}})=-\epsilon_{i^\prime j^\prime}\epsilon_{k l}\chi^{j^\prime l}$.

The set $x^{i k'}$ satisfying the pseudoreality condition can be represented as 4 real coordinates
\be
    x^M =  {\textstyle\frac 1 {2}}\, (\sigma^M)_{k' i} x^{i k'}\, , \ \ \ \ \ \ \ \ \ \
\sigma^M = (\vec{\sigma}, i)
\ee
 or else  as two complex coordinates $v^{m}$, $m=1,2$,
\footnote{We are changing notation here reserving the symbol z for the complex coordinates of the {\it reduced} model,
see Eq.\p{sootvet} below.}
 \be
\lb{v12}
x^{i k'} \ =\ \left( \begin{array}{rr} \bar v^2 & \bar v^1 \\ v^1 & - v^2 \end{array} \right) \, .
 \ee
The same holds for the superfields ${\cal X}^{i k'}$.

The second representation (via two complex coordinates) is convenient when performing
the Hamiltonian reduction. We may represent $v^m = x^m + iy^m$ and express the laws of supersymmetry
transformations via $x^m$ and $y^m$. Similar to what was the case for the ${\cal N} = 2$ superfields,
one can be convinced that these laws coincide with the supersymmetry transformations for the

({\bf 2},\,{\bf 4},\,{\bf 2}) multiplet if identifying $\dot{y}^m$ with the auxiliary fields $B^M$
(see \cite{DI07} for the discussion of the reduction ({\bf 4},\,{\bf 4},\,{\bf 0})
$\to$ ({\bf 2},\,{\bf 4},\,{\bf 2}) in superfield language
 using gauging procedure).
Thus, to perform the Hamiltonian reduction using the Lagrangian language, one  should only
substitute $ \dot{y}^m  \to B^M$ in the component expression
for the Lagrangian.

A wide class of HKT models are described by the superfield Lagrangian  involving $n$ ({\bf 4},\,{\bf 4},\,{\bf 0})
linear multiplets,
\begin{equation}
\label{act1}
L = \int  d^4 \theta\, {\cal L}({\cal X}_\alpha)\,,
\end{equation}
where $\alpha = 1,\ldots, n$ is the flavor index.

\subsubsection{4-dimensional model.}
Consider as the simplest example the model with only one multiplet, $n=1$.

The simplest HKT metric is a conformally flat metric in 4 dimensions,
\be
\lb{metr4}
ds^2 = G(x)\, dx^M dx^M \, .
\ee
The complex structures can be chosen as
\be
\lb{Icanon}
\!\!
I_M^{\ N} = \left( \begin{array}{cccc} 0&-1&0&0 \\ 1&0&0&0 \\ 0&0&0&-1 \\ 0&0&1&0 \end{array} \right),\,\,
J_M^{\ N} =  \left( \begin{array}{cccc} 0&0&1&0 \\ 0&0&0&-1 \\ -1&0&0&0 \\ 0&1&0&0 \end{array} \right)
 ,\,\,
K_M^{\ N} = \left( \begin{array}{cccc} 0&0&0&1 \\ 0&0&1&0 \\ 0&-1&0&0 \\ -1&0&0&0 \end{array} \right).
\ee
These are self-dual matrices expressible via 't Hooft symbols. The characteristic for a HKT manifold
closed holomorphic form is
\be
\lb{omega4}
\omega = \Omega_J + i \Omega_K = 2\, G(x)\, dv^1 \wedge dv^2 \, ,
\ee
i.e. ${\cal I}_{mn} =  G(x)\, \epsilon_{mn}$.
The complex supercharges \p{SR-HKT-s} with this hypercomplex structure matrix
were written down in \cite{QHKT,taming}.   When $G$ depends only on Re($v^m$), one can perform a Hamiltonian
reduction. After that, we are left with only one complex coordinate
$z = {\rm Re}(v^1) + i  {\rm Re}(v^2)$, the complex metric involves only one
component and is real. We obtain
the usual ${\cal N} = 4$ K\"ahler  model on a manifold of real dimension 2.

Note that in Ref.\cite{quasi} a certain nontrivial
quasicomplex 2-dimensional model was constructed and studied.
It was observed that the spectrum of this model involves degenerate quartets and enjoys ${\cal N} = 4$ supersymmetry.
The corresponding metric  cannot be obtained, however, by a Hamiltonian reduction of an HKT metric and the observed
extended supersymmetry has a different origin.

\subsubsection{4$n$--dimensional model.}

 When $n > 1$, the situation becomes more complicated and more interesting.
 The 8--dimensional model described by two linear  ({\bf 4}, {\bf 4}, {\bf 0})
multiplets was studied in details in \cite{FIS2}.
We consider here a model with an arbitrary number $n$ of such multiplets. Anticipating a subsequent reduction,
it is convenient to use the complex notation and describe each of them as the  ${\cal N}{=}\,4$ superfield
${\cal V}^{m}_{\alpha} (t; \, \theta,\eta,\bar\theta,\bar\eta)$
 where $\theta,\eta,\bar\theta,\bar\eta$
are odd coordinates of ${\cal N}{=}\,4$ superspace and $m = 1,2$.
 The fields ${\cal V}^{m}$ are related to ${\cal X}^{i k'}$ as in  \p{v12}.

To make things quite transparent, we can   represent them in terms of
${\cal N}{=}\,2$ superfields. Each superfield ${\cal V}^m_\alpha$ is expressed via
 two ({\bf 2}, {\bf 2}, {\bf 0}) complex  superfields $V^m_\alpha$ and their conjugates.
Its expansion of  in $\eta, \bar\eta$ reads  \cite{FIS2}
\begin{eqnarray}\label{V-4-2exp}
&{\cal V}^{m}_\alpha = V^{m}_\alpha +\eta\, \epsilon^{mn}\bar D
\bar V^{\bar n}_\alpha -i\eta\bar\eta \,\dot{V}^{m}_\alpha \,,\quad &
\bar{\cal V}^{\bar{m}}_\alpha  = \bar V^{\bar{m}}_\alpha -\bar \eta\, \epsilon^{{\bar m}\bar{n}}  D V^{n}_\alpha
+i\eta\bar\eta \,\dot{\bar V}^{\bar{m}}_\alpha   \,,
\end{eqnarray}
with $D \equiv D_\theta$ and $\bar D \equiv \bar D_\theta$ defined in \p{der-n2}.
$V^{m}_\alpha(\theta,\bar\theta)$
are chiral superfields, $\bar D_\theta V^{m}_\alpha = 0$, $D_\theta\bar V^{\bar{m}}_\alpha = 0$.  They have a standard component expansion
\begin{eqnarray}\label{V-2-exp}
&V^m_\alpha = v^m_\alpha + \sqrt{2}\, \theta\, \psi^{m}_\alpha -i\theta\bar\theta\, \dot{v}^{m}_\alpha \,,\quad &
\bar V^{\bar m}_\alpha = \bar v^{\bar m}_\alpha - \sqrt{2}\, \bar\theta\, \bar\psi^{\bar m}_\alpha  +i\theta\bar\theta\, \dot{\bar v}^{\bar m}_\alpha \,,
\end{eqnarray}

The superfield action is
\be
\lb{act4}
S={\textstyle\frac14}\displaystyle{\int}dt\, d\theta d\bar\theta\, d\eta d\bar\eta\, {\cal L}({\cal V}_\alpha, \bar{\cal V}_\alpha) = \
{\textstyle\frac14} \int dt \, d\theta d\bar\theta \,
\left( \Delta_{m\bar n}^{\alpha\beta} {\cal L} \right) D V^m_\alpha \bar D \bar V^{\bar n}_\beta \, ,
\ee
where
\begin{equation} \lb{Del2}
\Delta_{m\bar n}^{\alpha\beta} {\cal L} \equiv  \frac{\partial^2 {\cal L}(V, \bar V)}{\partial V^m_\alpha \partial \bar V^{\bar n}_\beta }   +
\epsilon_{m k} \epsilon_{\bar n \bar l} \frac{\partial^2 {\cal L}(V, \bar V)}{\partial \bar V^{\bar k}_\alpha \partial V^l_\beta}\,.
\end{equation}
Note that, for $\alpha = \beta$,
\begin{equation} \lb{id-Del2}
\Delta^{\alpha\alpha}_{m\bar n}
=\delta_{m\bar n}
\frac {\partial^2 {\cal L}}{\partial V_\alpha^k \partial \bar V_\alpha^{\bar k}} \qquad\qquad \mbox{(here no summation with respect to $\alpha$)}\, .
\end{equation}

The R.H.S. of Eq.\p{act4} expresses the action in terms of the ${\cal N} =2$
superfields. It is invariant under "hidden" supersymmetry transformations \cite{Delduc}
 \be
\lb{N2trans}
\delta_\eta V^m_\alpha = -\epsilon_\eta \epsilon^{mn} \bar D \bar V^{\bar n}_\alpha, \ \ \ \ \ \ \
\delta_\eta \bar V^{\bar m}_\alpha = \bar \epsilon_\eta
\epsilon^{\bar m \bar n}  D V^{ n}_\alpha \, .
 \ee

Integrating it further over $d^2\theta$, we obtain
 the component Lagrangian. Its bosonic part
\footnote{See Appendix B for the complete expression.}  reads
\be
\lb{lagr4-b}
L_b \ =\ \left( \Delta_{m\bar n}^{\alpha\beta} {\cal L} \right) \dot v^m_\alpha \dot {\bar v}^{\bar n}_\beta \, ,
\ee

The Lagrangian \eqref{lagr4-b}
implies the target space metric
   \begin{equation} \lb{metr8}
ds^2 =
\left( \Delta_{m\bar n}^{\alpha\beta} {\cal L} \right)
dv^m_\alpha d\bar v^{\bar n}_\beta \, .
\end{equation}
The closed holomorphic form is
\be
\lb{omega8}
\Omega =
{\textstyle\frac12} \,
\epsilon_{mk}\,\frac {\partial^2 {\cal L}}{\partial \bar v^{\bar k}_\alpha \partial v^n_\beta}\,
dv^m_\alpha \wedge dv^n_\beta\, .
\ee

\subsubsection{The reduced model and its superfield description.}

When $L$ is real, the metric \p{metr8} is Hermitian, but not necessarily symmetric and real.
To be able to perform the Hamiltonian reduction, we have to impose an extra
constraint for the metric to be independent on Im($v^m_\alpha$).
 This implies also certain constraints on ${\cal L}$.
A generic admissible form of ${\cal L}$ will be written and discussed in Appendix C. Here we write a restricted
Ansatz for ${\cal L}$ generating the metric with a constant antisymmetric under $\alpha \leftrightarrow \beta$ part,
\be
\lb{constantAns}
{\cal L}\,\, =  {\cal K} \, - \ {\textstyle\frac i2}  \,
{\cal C}_{\alpha\beta}
 \left( {\cal V}^1_\alpha \bar {\cal V}^1_\beta - {\cal V}^2_\alpha \bar {\cal V}^2_\beta  \right)
\ee
with a real function ${\cal K}$ generating the
real symmetric part of the target space metric \eqref{metr8} that does not
depend on Im($v^m_\alpha$),
and a real
constant antisymmetric ${\cal C}_{\alpha\beta}=-{\cal C}_{\beta\alpha}$
(the coefficients are chosen for further convenience).

Consider the second term in \p{constantAns}.
Its contribution to the bosonic kinetic Lagrangian reads
    \be
\label{antikinbos}
L_{\rm bos} \ =\ - 2\, {\cal C}_{\alpha\beta}
\left(\dot{x}^1_\alpha \dot{y}^1_\beta -
\dot{x}^2_\alpha \dot{y}^2_\beta \right) \, .
    \ee
 Bearing in mind our recipe $\dot{y}^m_{\alpha} \to B^M_{\alpha} $, this gives
\be
\label{antikinbos_xB}
L_{\rm bos}^{\rm red} \ =\  -2\, {\cal C}_{\alpha\beta}
\left( \dot{x}^1_\alpha B_\beta^1 - \dot{x}^2_\alpha B^M_\beta \right)
\ee
in the reduced Lagrangian. The presence of the structure \p{antikinbos_xB}
is characteristic to quasicomplex de Rham models --- see Eq.\p{offL}.
In our case, we are dealing with an ${\cal N} = 4$ model, a  deformation of the
K\"ahler - de Rham complex not studied before.
 To reveal K\"ahlerian nature of the reduced model, we introduce {\it new} complex coordinates
(made out of the real parts of $v^m_\alpha$ and of $B^M_\alpha$),
\be
\lb{sootvet}
z^\alpha = x^1_\alpha + i x^2_\alpha\,, \qquad A^\alpha = B^1_\alpha + i B^2_\alpha \, .
\ee
In this variables, the full bosonic Lagrangian of the reduced model has the form
  \be
\lb{lagr242-b}
 L_{b} = h_{\alpha {\bar \beta}} (z, \bar z) \left(\dot z^{\alpha} \dot {\bar z}^{\bar\beta}  + A^{\alpha}
{\bar A}^{\bar\beta} \right)
 -  {\cal C}_{[\alpha\beta]} \left( \dot{z}^{\alpha} A^{\beta} +
  \dot{\bar z}^{\bar\alpha}\bar
A^{\bar\beta} \right)\, ,
\ee
where
\be
h_{\alpha\bar\beta} \ =\ \frac {\partial^2 {\cal K}}{\partial z^\alpha \partial \bar z^{\bar \beta}} \,
 \ee

Besides the familiar first term with the K\"ahler metric, the Lagrangian
\p{lagr242-b}
involves also an extra term involving $C_{[\alpha \beta]}$. In modifies the {\it full} complex metric
obtained after excluding
the the auxiliary fields $A^\alpha$,
$\bar A^{\bar \beta}$,
 \be
\tilde{h}_{\alpha \bar\beta} \ =\ h_{\alpha \bar\beta} + C_{[\alpha \delta]} h^{\bar \gamma \delta}
C_{[\bar \gamma \bar \beta]} \, ,
  \ee
where
$h^{\bar\alpha \beta} h_{\beta \bar\gamma} = \delta^{\bar\alpha}_{\bar\gamma},
 h_{\alpha \bar\gamma}h^{\bar\gamma \beta} = \delta^{\beta}_{\alpha}$.

We will see now how this system is expressed
in terms of the ({\bf 2}, {\bf 4}, {\bf 2}) superfields.
In contrast to the $n=1$ case where the extra terms in \p{lagr242-b} were absent, and the reduced model is a
well-known K\"ahler --
de Rham model, with the Lagrangian representing
a superspace integral of the K\"ahler potential, when $n \geq 2$, the superfield Lagrangian is somewhat more
complicated.

We consider a set of $n$ chiral ({\bf 2}, {\bf 4}, {\bf 2}) multiplets
 ${\cal Z}^\alpha(t; \, \theta,\bar\theta,\eta,\bar\eta)$
satisfying the constraints
\begin{equation}\label{242-const}
\bar D_\theta\,{\cal Z}^\alpha=0\,,\qquad \bar D_\eta\,{\cal Z}^\alpha=0\,,
\end{equation}
where $D_\theta$, $\bar D_\theta$ are defined by \eqref{der-n2} and
\begin{equation}\label{der-n2-eta}
D_\eta=\partial_{\eta}-i\bar \eta \partial_t\,,\qquad \bar D_\eta =
-\partial_{\bar\eta}+i \eta \partial_t\,.
\end{equation}

It is convenient to represent ${\cal Z}^\alpha$  via ${\cal N} = 2$ superfields \cite{IS}:
the usual chiral ({\bf 2}, {\bf 2}, {\bf 0}) superfield
$Z^\alpha(\theta,\bar\theta)$ and the superfield $\Phi^\alpha(\theta,\bar\theta)$ of the type
({\bf 0}, {\bf 2}, {\bf 2}),
\be
\lb{calZalpha}
{\cal Z}^\alpha \ =\ Z^\alpha + \sqrt{2}\, \eta \, \Phi^\alpha - i \eta \bar \eta\, \dot{Z}^\alpha \, ,
\ee
where
\be
\lb{Zalpha}
Z^\alpha= z^\alpha + \sqrt{2}\, \theta\,\phi^\alpha -i\theta\bar\theta\,\dot z^\alpha\,, \qquad \bar D_\theta\,{Z}^\alpha=0\,,
\ee
\be
\lb{Phialpha}
\Phi^\alpha =  \varphi^\alpha + \sqrt{2}\, \theta\,A^\alpha -i\theta\bar\theta\,
\dot \varphi^\alpha\,,\qquad \bar D_\theta\,{\Phi}^\alpha=0\,.
\ee
In \eqref{Zalpha}, \eqref{Phialpha} the dynamical fields $z^\alpha$ and complex auxiliary fields $A^\alpha$ are bosonic
whereas $\phi^\alpha$, $\varphi^\alpha$ are fermionic.

Now, the standard K\"ahler model is described by the action
$\sim {\int}dt\, d\theta d\bar\theta\, d\eta d\bar\eta\, {\cal K}({\cal Z},\bar{\cal Z})$.
We note that one can add to this expression  $F$-terms of a certain particular form,
\footnote{This is specific for $d=1$. In $d \geq 2$ field theories, it is absent. Probably, this is the reason why such a
structure
was not considered before.}
\be
\lb{act242}
 S =  {\textstyle\frac14}\displaystyle{\int}dt\, d\theta d\bar\theta\,
d\eta d\bar\eta\, {\cal K}({\cal Z},\bar{\cal Z})
+  {\textstyle\frac14}  \displaystyle{\int}dt\, d\theta \, d\eta \,
{\cal F}_{\alpha\beta}({\cal Z}){\cal Z}^{\alpha}\dot{\cal Z}^{\beta}-  {\textstyle\frac14}
\displaystyle{\int}dt\, d\bar\theta\, d\bar\eta\,
\bar{\cal F}_{\bar\alpha\bar\beta}(\bar{\cal Z})\bar{\cal Z}^{\bar\alpha}\dot{\bar{\cal Z}}^{\bar\beta}\,.
\ee
The action is expressed in terms of the ${\cal N} =2$ superfields \p{Zalpha}, \p{Phialpha}  as follows,
\be
\lb{act242N2}
 S =  - {\textstyle\frac14}\displaystyle{\int}dt\, d\bar\theta d\theta \,
 h_{\alpha \bar \beta}(Z, \bar Z) \left( D Z^\alpha \bar D \bar Z^{\bar \beta} - 2
\Phi^\alpha \bar \Phi^{\bar \beta} \right)
+  {\textstyle\frac 1{\sqrt{2}}}  \left[ \displaystyle{\int}dt\, d\theta \,
{\cal C}_{[\alpha\beta]}( Z) \, \Phi^{\alpha}\dot{ Z}^{\beta} \ +\ {\rm c.c.} \right] \, \,
\ee
where
\be
\lb{C-def}
{\cal C}_{\alpha\beta} =
\ {\cal F}_{\alpha\beta} + Z^\gamma \partial_\alpha {\cal F}_{\gamma\beta}\,.
\ee
The full component expression of this Lagrangian is written in Appendix B. Its bosonic part \p{lagr242-b-add}
depends on the holomorphic antisymmetric tensor $C_{[\alpha\beta]}$. The expression
\p{lagr242-b} corresponds to the particular choice
of constant real
${\cal F}_{\alpha\beta} = {\cal C}_{\alpha\beta}$.

Alternatively, one can express the Lagrangian of this deformed K\"ahler  model via an even number
of real ({\bf 1}, {\bf 2}, {\bf 1}) multiplets [obviously, ({\bf 2}, {\bf 4}, {\bf 2}) =
({\bf 2}, {\bf 2}, {\bf 0}) +
({\bf 0}, {\bf 2}, {\bf 2}) = ({\bf 1}, {\bf 2}, {\bf 1}) + ({\bf 1}, {\bf 2}, {\bf 1})].
The model represents then a particular
case of the quasicomplex de Rham model \p{Lquasicomplex}, with
 the metric $g_{(MN)}$ and the real antisymmetric tensor $b_{[MN]}$ having a particular form
depending on the Hermitian
$h_{\alpha \bar \beta}$ and the holomorphic $C_{[\alpha\beta]}$ in Eq.\p{lagr242-b}.

\section{Conclusions.}
 We list again here the most essential original observations made in this paper.
\begin{enumerate}
\item
We derived the new simple representation \p{SR-HKT-s} for the HKT supercharges. In contrast to \cite{QHKT}, the
supercharges are expressed via complex coordinates and the fermion variables with world (rather than the tangent space)
indices. The second pair of the supercharges involves the holomorphic matrix  ${\cal I}_{mn}$ of {\it hypercomplex structure}.

\item We presented the new quasicomplex K\"ahler -- de Rham model \p{act242} where, in addition to the standard
K\"ahler structure,  the Lagrangian involves  extra $F$-terms of a certain particular form.

\item We have shown that the models of this kind are obtained after a Hamiltonian reduction of  HKT models.
We discussed and justified the known recipe, according to which the canonical velocities corresponding to the variables
subject to reduction  in the original Lagrangian should be replaced by the auxiliary fields, $\dot y^m \to B^M$. In the beginning
of Sect. 4, we showed how the supertransformation laws of the original multiplet and the reduced multiplet match. In Appendix A,
we explored how the Hamiltonian reduction works at the Lagrangian component level for a wide class of systems (not necessarily
supersymmetric.

It is interesting to see how this procedure works for CKT and OKT models. What kind of models are obtained as a result of their
Hamiltonian reduction ?
This question is under study now.

\end{enumerate}

\bigskip
\section*{Acknowledgements}

\noindent

We are indebted to Evgeny Ivanov for illuminating discussions.

S.F. acknowledges support from the RFBR
grants 12-02-00517, 13-02-90430 and
a grant of the IN2P3-JINR Program. He would like to thank SUBATECH,
Universit\'{e} de Nantes, for the warm hospitality
in the course of this study.

\renewcommand\theequation{A.\arabic{equation}} \setcounter{equation}0
\section*{Appendix A: Hamiltonian reduction in \\ \hspace*{4cm} component Lagrangian language.}

As was discussed in the main text, the reduced component Lagrangian is obtained from the original Lagrangian by trading
time derivatives of the coordinates subject to reduction (in our case, time derivatives of the imaginary coordinates parts)
by auxiliary fields.
We will illustrate here how it works by an explicit calculation. Namely, we compare the reduced Hamiltonians obtained by
{\it (i)} Hamiltonian reduction from the original one and {\it (ii)} by the Legendre transformation from the reduced
Lagrangian and show that they coincide.

Our starting point is the complex sigma model with the coordinates
\begin{equation}\label{compl-dyn}
{z}^m=x^m+iy^m\,, \qquad \bar{z}^{\bar m}=x^m-iy^m\, .
\end{equation}
 The metric tensor $h_{m \bar n}$ is Hermitian, but not necessarily real,
\begin{equation}\label{herm-tens}
{h}_{m\bar n}= \frac 12 \left({g}_{(mn)}+i{b}_{[mn]} \right)\,.
\end{equation}
In the case when the real tensors ${g}_{(mn)}$ and ${b}_{[mn]}$  and other structures in the Hamiltonian
do not depend on the imaginary parts $y^m$, we can perform
the Hamiltonian reduction. Disregard for simplicity the fermion  variables (the recipe $\dot{y}^m \to B^M$ works actually not
only for SQM where it is justified by comparing the
supertransformation laws before and after reduction,
but also for purely bosonic systems) and consider the Lagrangian
\begin{equation}\label{lagr-1-app}
L  = {h}_{m\bar n}\dot{z}^m \dot{\bar{z}}^{\bar n}+G_{m}\dot{z}^m
+\bar{G}_{\bar m}\dot{\bar{z}}^{\bar m}-V\,,
\end{equation}
where $G_{m}$, $\bar{G}_{\bar m}$ and $V$ do not depend on the imaginary parts $y^m$.
\footnote{We need not be concerned with their nature, though one can also note that, in the Dolbeault
model we are mostly interested in here (HKT models represent their particular case),
$G_m$  is associated with the gauge potential. In the full Lagrangian that also includes
 fermions, $G_m$ contains in addition
a  bilinear in fermions term.  }

The corresponding Hamiltonian  is
\begin{equation}\label{ham-1-app}
H = (\bar{\pi}_{\bar n}-\bar{G}_{\bar n})(h^{-1})^{\bar n m}(\pi_m -G_{m}) +
 V\,.
\end{equation}
We represent now
\begin{equation}\label{mom-1-app}
\pi_m={\textstyle\frac12}\,\left(p^{(x)}_m-ip^{(y)}_m \right),\qquad
\bar{\pi}_{\bar m}={\textstyle\frac12}\,\left(p^{(x)}_m+ip^{(y)}_m \right).
\end{equation}
and perform the reduction. The reduced Hamiltonian is
\be
\lb{Hreduced}
H^{\rm red} \ =\ {\textstyle\frac14}\, (h^{-1})^{NM} (p_N - 2\bar{G}_N)(p_M - 2G_M) + V \, ,
 \ee
where $  (h^{-1})^{NM} \equiv  (h^{-1})^{\bar n m}$. We are now using capital Latin indices
 and  are not displaying anymore the superscript $^{(x)}$ for
$p$.

On the other hand, the reduced {\it Lagrangian} is obtained from \p{lagr-1-app} by substituting
$\dot{y}^m \to B^M$ ($B^M$ being the real auxiliary fields) and reads
  \be
 \lb{LredwithA}
L^{\rm red}\ =\ h_{MN} (\dot{x}^M + iB^M) (\dot{x}^N - iB^N) + G_M (\dot{x}^M + iB^M) +
\bar{G}_M (\dot{x}^M - iB^M) -V \, .
   \ee
 Representing
\begin{equation}\label{F-1-app}
G_M={\textstyle\frac12}\,\left(R_M+iM_M \right),\qquad
\bar{G}_{ M}={\textstyle\frac12}\,\left(R_M-iM_M \right).
\end{equation}
and  excluding $B^M$, we obtain
   \begin{equation}
  \label{Lreduced}
L^{\rm red} \ =\ \frac 12 \, {G}_{MN} \dot{x}^M \dot{x}^{N} +\left[R_M+b_{MK}(g^{-1})^{KN} M_{N}\right]\dot{x}^M -
{\textstyle\frac12}\,(g^{-1})^{MN}M_{M}M_{N}
-V\,,
   \end{equation}
where
\begin{equation}\label{G-app}
G_{MN}=g_{MN}+b_{MK}(g^{-1})^{KL}b_{LN}\,.
\end{equation}

Bearing in mind that the tensor $(h^{-1})^{NM}$ entering \p{Hreduced}
is expressed as
   \begin{equation}\label{h-1-app}
(h^{-1})^{NM}= 2 \left[ (G^{-1})^{NM} - i(G^{-1})^{NK} b_{KL} (g^{-1})^{LM} \right] \,,
\end{equation}
it is a straightforward exercise to verify that
\p{Hreduced} and \p{Lreduced} are related to
each other by the standard Legendre transformation.

\renewcommand\theequation{B.\arabic{equation}} \setcounter{equation}0
\section*{Appendix B: Component Lagrangians}

\subsection*{B.1. Multiplets ({\bf 4},\,{\bf 4},\,{\bf 0})}

Superfield action \eqref{act4} of interacting linear ({\bf 4},\,{\bf 4},\,{\bf 0}) multiplets
yields the component Lagrangian
$
L = L_{b}+L_{2f}+L_{4f}
$,
\begin{eqnarray}
\lb{lagr4-b-app}
L_{b} &= &
\left( \Delta^{\alpha\beta}_{m\bar n} {\cal L} \right) \dot v_\alpha^m \dot {\bar v}_\beta^{\bar n}  \, ,
\\ [11pt]
L_{2f} &=& \ \ {\textstyle\frac{i}{2}}\left(\Delta^{\alpha\beta}_{m\bar n}{\cal L} \right)
\left(\psi_\alpha^m\dot{\bar\psi}_\beta^{\bar n}-\dot\psi_\alpha^m\bar\psi_\beta^{\bar n} \right)
\nonumber\\ [11pt]
&& + i\left(\partial^\gamma_k \Delta_{m\bar n}^{\alpha\beta}{\cal L}\right)\dot v_\alpha^m\bar\psi_\beta^{\bar n}\psi_\gamma^k
-i\left(\bar\partial^{\gamma}_{\bar k} \Delta_{m\bar n}^{\alpha\beta}{\cal L}\right)\bar\psi_\gamma^{\bar k}\psi_\alpha^m \dot {\bar v}_\beta^{\bar n}
\lb{lagr4-2f-app}\\ [11pt]
&& +{\textstyle\frac{i}{2}}\left(\dot v_\gamma^k\partial^{\gamma}_k
-\dot{\bar v}_\gamma^{\bar k}\bar\partial^{\gamma}_{\bar k}\right)
\left(\Delta^{\alpha\beta}_{m\bar n}{\cal L} \right)\psi_\alpha^m\bar\psi_\beta^{\bar n} \,,
\nonumber\\ [11pt]
\lb{lagr4-4f-app}
L_{4f} &=& {\textstyle\frac{1}{4}}\,\epsilon_{mp}\,\epsilon_{\bar k \bar r}
\left(\Delta^{\gamma\delta}_{r\bar l}\Delta^{\beta\alpha}_{ n\bar p} {\cal L}
\right)\psi_\alpha^{m}\psi_\beta^{n}\bar\psi_\gamma^{\bar k}\bar\psi_\delta^{\bar l}
\, ,
\end{eqnarray}
where
$
\partial^{\,\alpha}_m= {\partial}/{\partial v_\alpha^m}$,
$\bar\partial^{\,\alpha}_{\bar m}= {\partial}/{\partial \bar v_\alpha^{\bar m}}
$
and $\Delta^{\alpha\beta}_{m\bar n} {\cal L}$ is defined by \eqref{Del2}.

\subsection*{B.2. Multiplets ({\bf 2},\,{\bf 4},\,{\bf 2})}

The component Lagrangian $\tilde L = \tilde L_{b}+\tilde L_{2f}+\tilde L_{4f}$ of the superfield action \eqref{act242}
of interacting linear chiral ({\bf 2},\,{\bf 4},\,{\bf 2}) multiplets  has the following form

\be
\lb{lagr242-b-add}
\begin{array}{rcl}
\tilde L_{b} &=& \left(\partial_{\alpha}\bar\partial_{\bar\beta}{\cal K} \right)\left(\dot z^{\alpha} \dot {\bar z}^{\bar\beta} \ +\
A^{\alpha} {\bar A}^{\bar\beta}\right)
-{\cal C}_{[\alpha\beta]}\,\dot{z}^{\alpha} A^{\beta}-
\bar{\cal C}_{[\bar\alpha\bar\beta]}\,\dot{\bar z}^{\bar\alpha}\bar A^{\bar\beta}\, ,
\end{array}
\ee
\be
\lb{lagr242-2f-add}
\begin{array}{rcl}
\tilde L_{2f} &=& \ \ \frac{i}{2}\left(\partial_{\alpha}\bar\partial_{\bar\beta}{\cal K} \right)
\left(\bar\phi^{\bar \beta}\dot\phi^\alpha- \dot{\bar\phi}^{\bar \beta}\phi^\alpha+
\bar\varphi^{\bar \beta}\dot\varphi^\alpha- \dot{\bar\varphi}^{\bar \beta}\varphi^\alpha\right)
\\ [11pt]
&& -{\textstyle\frac12}\,
{\cal C}_{[\alpha\beta]}\left(\varphi^{\alpha}\dot\phi^{\beta}-\phi^{\alpha}\dot\varphi^{\beta} \right)+
{\textstyle\frac12}\,
\bar{\cal C}_{[\bar\alpha\bar\beta]}
\left(\bar\varphi^{\bar\alpha}\dot{\bar\phi}^{\bar\beta}-\bar\phi^{\bar\alpha}\dot{\bar\varphi}^{\bar\beta} \right)
\\ [11pt]
&& - \frac{i}{2}\left[\left(\partial_{\alpha}\partial_{\beta}\bar\partial_{\bar\gamma}{\cal K} \right)\dot z^\alpha
-\left(\bar\partial_{\bar\alpha}\partial_{\beta}\bar\partial_{\bar\gamma}{\cal K} \right)\dot {\bar z}^{\bar\alpha}\right]
\left(\phi^\beta\bar\phi^{\bar \gamma}+ \varphi^\beta\bar\varphi^{\bar \gamma}\right)
\\ [11pt]
&& -\left(\partial_{\alpha}\bar\partial_{\bar\beta}\bar\partial_{\bar\gamma}{\cal K}\right)
A^\alpha\bar\phi^{\bar \beta}\bar\varphi^{\bar \gamma}
+ \left(\bar\partial_{\bar\alpha}\partial_{\beta}\partial_{\gamma}{\cal K}\right)
{\bar A}^{\bar \alpha}\phi^{\beta}\varphi^\gamma
\\ [11pt]
&& -{\textstyle\frac12}\,\phi^{\gamma}\left(\partial_{\gamma}{\cal C}_{[\alpha\beta]} \right)
\dot z^{\alpha}\varphi^{\beta}
+ {\textstyle\frac12}\,\varphi^{\gamma}\left(\partial_{\gamma}{\cal C}_{[\alpha\beta]} \right)
\dot z^{\alpha}\phi^{\beta}
\\ [11pt]
&&
+{\textstyle\frac12}\,\bar\phi^{\bar\gamma}\left(\bar\partial_{\bar\gamma}\bar{\cal C}_{[\bar\alpha\bar\beta]} \right)
\dot {\bar z}^{\bar\alpha}\bar\varphi^{\bar\beta}
- {\textstyle\frac12}\,\bar\varphi^{\bar\gamma}\left(\bar\partial_{\bar\gamma}\bar{\cal C}_{[\bar\alpha\bar\beta]} \right)
\dot {\bar z}^{\bar\alpha}\bar\phi^{\bar\beta}
\,,
\end{array}
\ee
\be
\lb{lagr242-4f-add}
\tilde L_{4f} = -\left(\partial_{\alpha}\partial_{\beta}\bar\partial_{\bar\gamma}\bar\partial_{\bar\delta}{\cal K} \right)
\phi^\alpha\varphi^\beta\bar\phi^{\bar\gamma}\bar\varphi^{\bar\delta}
\, ,
\ee
where
$\partial_\alpha\equiv {\partial}/\partial z^\alpha$,
$\bar\partial_{\bar \alpha}\equiv {\partial}/\partial \bar z^{\bar \alpha}$ and
${\cal C}_{\alpha\beta}=\partial_{\alpha}(z^{\gamma}{\cal F}_{\gamma\beta})$,
$\bar{\cal C}_{\bar\alpha\bar\beta}=
\partial_{\bar\alpha}(\bar z^{\bar\gamma}\bar{\cal F}_{\bar\gamma\bar\beta})$.
Note the identities
\be
\lb{id-C}
\partial_{\alpha}{\cal C}_{[\beta\gamma]}+
\partial_{\beta}{\cal C}_{[\gamma\alpha]}+\partial_{\gamma}{\cal C}_{[\alpha\beta]} = 0\,,\qquad
\bar\partial_{\bar\alpha}\bar{\cal C}_{[\bar\beta\bar\gamma]}+
\bar\partial_{\bar\beta}\bar{\cal C}_{[\bar\gamma\bar\alpha]}+\bar\partial_{\bar\gamma}\bar{\cal C}_{[\bar\alpha\bar\beta]} = 0\, .
\ee

It is instructive to compare the Lagrangian \p{lagr242-b-add} - \p{lagr242-4f-add} with the Lagrangian of a {\it generic}
quasicomplex ${\cal N} = 2$ model derived in \cite{quasi}. The  expression of the Lagrangian \p{Lquasicomplex} into the components
of  \p{XMdecomp} reads
  \begin{gather}
L=\frac{1}{2}\, g_{(MN)}\big(\dot{x}^M\dot{x}^N+ B^MB^N\big)+b_{[MN]}\dot{x}^M B^N+\frac i2\, g_{(MN)}
\big(\bar\chi^N\nabla{\chi}^M-\nabla\bar\chi^N{\chi}^M\big)
\nonumber
\\
\hphantom{L=}{}
-\frac{1}{2} \, b_{[MN]}\big(\bar\chi^N\dot{\chi}^M-\dot{\bar\chi}^N{\chi}^M \big)
-\frac12\, \partial_P\partial_Q\big(g_{(MN)}+i b_{[MN]}\big)\chi^M\bar\chi^N\chi^P\bar\chi^Q
\nonumber
\\
\hphantom{L=}{}
+G_{M,PQ} B^M\chi^P \bar\chi^Q-\frac{1}{2}\left(\partial_M b_{[NP]}+\partial_N b_{[MP]}\right)\dot{x}
^P\chi^M\bar\chi{}^N\, ,
\label{offL}
\end{gather}

\begin{gather}
G_{M,PQ}=\Gamma_{M,PQ}-\frac i2\left(\partial_M b_{[PQ]}+\partial_P b_{[QM]}+\partial_Q b_{[MP]}\right) \, ,
\label{Chris}
\end{gather}
 with $\Gamma_{M,
PQ}$ being the standard Christoffels for $g_{(MN)}$,
\be
\Gamma_{M,PQ}=\frac12\big[\partial_P g_{(MQ)}+\partial_Q g_{(MP)}-\partial_M g_{(PQ)}\big]\, ,
\ee
and
 \be
\nabla\psi^M=\dot{\psi}^M+\Gamma^M_{NQ}\dot{x}{}^N\psi^Q \, .
 \ee

One can observe that the Lagrangian \p{offL} involves among other terms the $b$-dependent
4-fermion term $\sim b \chi \bar \chi \chi \bar \chi$ and the
terms $ (\partial b) F \chi \bar \chi$, which do not have a counterpart in \p{lagr242-2f-add}, \p{lagr242-4f-add}.
Well, one can explicitly show that, in the ${\cal N} = 4$ case for {\it particular} $b_{[MN]}$
depending on holomorphic $C_{\alpha\beta}$, these contributions vanish, indeed.

\renewcommand\theequation{C.\arabic{equation}} \setcounter{equation}0
\section*{Appendix C: Reduction of general HKT models}

 We will construct here a generic form of the HKT prepotential in \p{act1} allowing reduction and show that the bosonic action
of the reduced model coincides with \p{lagr242-b} with generic ${\cal F}_{\alpha\beta}$.

We define the superfields
\be
{\cal V}^m_\alpha \ =\ {\cal X}^m_\alpha + i {\cal Y}^m_\alpha, \ \ \ \ \ \ {\cal Z}^\alpha \ =\ {\cal X}^1_\alpha + i{\cal X}^2_\alpha , \ \ \ \ \ \ \Xi^\alpha \ =\
{\cal Y}^1_\alpha + i{\cal Y}^2_\alpha
  \ee
with bosonic component fields
 \be
v^m_\alpha \ =\ x^m_\alpha + i y^m_\alpha, \ \ \ \ \ \ z^\alpha \ =\ x^1_\alpha + ix^2_\alpha , \ \ \ \ \ \ \xi^\alpha \ =\
y^1_\alpha + iy^2_\alpha \, .
  \ee

 We consider now the operator  $ \Delta_{m\bar n}^{\alpha\beta}$ entering \p{Del2} and express it in terms  of $z, \bar z, \xi, \bar \xi$,
 \be
\lb{Delta-via-zxi}
      2 \Delta_{1 \bar 1}^{\alpha\beta} \ =\ \frac {\partial^2}{\partial z^\alpha \partial \bar z^\beta} +
\frac {\partial^2}{\partial z^\beta \partial \bar z^\alpha} + i \left(
\frac {\partial^2}{\partial z^\alpha \partial  \xi^\beta} - \frac {\partial^2}{\partial z^\beta \partial  \xi^\alpha}
+ \frac {\partial^2}{\partial \bar z^\alpha \partial \bar \xi^\beta} - \frac {\partial^2}{\partial \bar z^\beta \partial  \bar \xi^\alpha}
\right) \nn
 + \frac {\partial^2}{\partial \xi^\alpha \partial \bar \xi^\beta} +
\frac {\partial^2}{\partial \xi^\beta \partial \bar \xi^\alpha} \nn
     2 \Delta_{2 \bar 2}^{\alpha\beta} \ =\ \frac {\partial^2}{\partial z^\alpha \partial \bar z^\beta} +
\frac {\partial^2}{\partial z^\beta \partial \bar z^\alpha} - i \left(
\frac {\partial^2}{\partial z^\alpha \partial  \xi^\beta} - \frac {\partial^2}{\partial z^\beta \partial  \xi^\alpha}
+ \frac {\partial^2}{\partial \bar z^\alpha \partial  \bar \xi^\beta} - \frac {\partial^2}{\partial \bar z^\beta \partial  \bar \xi^\alpha}
\right)  \nn
+ \frac {\partial^2}{\partial \xi^\alpha \partial \bar \xi^\beta} +
\frac {\partial^2}{\partial \xi^\beta \partial \bar \xi^\alpha} \nn
      2 \Delta_{1 \bar 2}^{\alpha\beta} \ =\ i \left( \frac {\partial^2}{\partial \bar z^\alpha \partial  z^\beta} -
\frac {\partial^2}{\partial z^\alpha \partial \bar z^\beta} \right) +
\frac {\partial^2}{\partial \xi^\alpha \partial  z^\beta} - \frac {\partial^2}{\partial z^\alpha \partial  \xi^\beta}
+ \frac {\partial^2}{\partial \bar z^\alpha \partial  \bar \xi^\beta} - \frac {\partial^2}{\partial \bar z^\beta \partial  \bar \xi^\alpha}
 \nn
+ i \left( \frac {\partial^2}{\partial \bar \xi^\alpha \partial  \xi^\beta} -
\frac {\partial^2}{\partial \xi^\alpha \partial  \bar \xi^\beta} \right) \nn
     2 \Delta_{2 \bar 1}^{\alpha\beta} \ =\ 
i \left( \frac {\partial^2}{\partial z^\alpha \partial \bar z^\beta} -
\frac {\partial^2}{\partial \bar z^\alpha \partial  z^\beta} \right) +
\frac {\partial^2}{\partial \xi^\alpha \partial  z^\beta} - \frac {\partial^2}{\partial z^\alpha \partial  \xi^\beta}
+ \frac {\partial^2}{\partial \bar z^\alpha \partial  \bar \xi^\beta} - \frac {\partial^2}{\partial \bar z^\beta \partial \bar \xi^\alpha}
 \nn
- i \left( \frac {\partial^2}{\partial \bar \xi^\alpha \partial  \xi^\beta} -
\frac {\partial^2}{\partial \xi^\alpha \partial  \bar \xi^\beta} \right) \, .
   \ee

The second term in \p{constantAns} is expressed via ${\cal Z}, \bar {\cal Z}, \Xi, \bar \Xi$ as
  \be
\lb{const-via-zxi}
-{\textstyle\frac 12} \, {\cal C}_{\alpha\beta} \left({\cal Z}^\alpha \Xi^\beta + \bar {\cal Z}^\alpha \bar \Xi^\beta \right) .
 \ee
It is linear in $\xi, \bar \xi$, but the result of the action of \p{Delta-via-zxi}
on  \p{const-via-zxi} gives a constant not depending on the imaginary parts of $v^m_\alpha$ entering
$\xi, \bar \xi$.

We  generalize now \p{const-via-zxi} by introducing the following term in the prepotential
  \be
\lb{gen-via-zxi}
\Delta{\cal L} = - {\textstyle\frac 12} \, \Big[ {\cal F}_{\alpha\beta}({\cal Z}) \, {\cal Z}^\alpha \Xi^\beta +
  \bar {\cal F}_{\alpha\beta}(\bar {\cal Z})\, \bar {\cal Z}^\alpha \bar \Xi^\beta \Big] \ .
 \ee
It is not difficult to observe that only the mixed terms in \p{Delta-via-zxi} involving both $z$ and $\xi$ derivatives give a nonzero result
when acting on \p{gen-via-zxi}. The result does not depend on $\xi, \bar \xi$ and is expressed in the form \p{lagr242-b}. {\it Q.E.D.}

\end{document}